\documentclass[11pt]{amsart}

\usepackage[T1]{fontenc}
\usepackage{lmodern}
\usepackage{microtype}
\usepackage{amsmath,amssymb,mathtools}
\usepackage{enumitem}
\usepackage[hidelinks]{hyperref}
\hypersetup{
  pdftitle={Fine Difference Structure and Prime-Power Depth of Bent Partitions},
  pdfauthor={Zhaorui Wu},
  pdfsubject={Bent partitions, partitioned difference families, and prime-power depth}
}

\newtheorem{theorem}{Theorem}[section]
\newtheorem{corollary}[theorem]{Corollary}
\newtheorem{proposition}[theorem]{Proposition}
\newtheorem{lemma}[theorem]{Lemma}
\theoremstyle{definition}
\newtheorem{definition}[theorem]{Definition}
\theoremstyle{remark}

\newcommand{\F}{\mathbb{F}}
\newcommand{\1}{\mathbf{1}}

\title[Fine difference structure and prime-power depth]
{Fine Difference Structure and\\
Prime-Power Depth of Bent Partitions}

\author{Zhaorui Wu}
\address{University of Oxford}
\email{shil6872@ox.ac.uk}
\date{28 August 2026}

\subjclass[2020]{06E30, 05B10, 11T06, 94A60}
\keywords{bent partition, partitioned difference family,
zero-difference balanced function, perfect nonlinearity, prime-power depth,
formal verification}

\begin{document}

\begin{abstract}
A $p$-ary bent partition of $\F_p^n$ is a partition into $K$ nonempty
cells such that every balanced assignment of its cells to $\F_p$ produces a
bent function.  It was asked whether every possible depth $K$ is a power of
$p$; for general $p$, previous affirmative results required regularity or
cell-symmetry hypotheses.  We prove the stronger unconditional statement
that, for every
nonzero $h$, exactly $p^n/K$ points remain in the same fine cell under
translation by $h$.  Thus the fine cells form a partitioned difference family
and the fine label map is zero-difference balanced.  Consequently
$K\mid p^n$, so $K=p^t$; nonempty cells further give $1\le t<n$.  In even
dimension, the classical cell-size theorem yields $K\mid p^{n/2}$.
Together with the known odd-dimensional ternary three-fibre parameter
restriction, this
gives the global bound $t\le\lfloor n/2\rfloor$.  The proof is an exact finite
average over balanced coarsenings.  The main counting identity and selected
consequences are formalized and kernel-checked in Lean 4.
\end{abstract}

\maketitle

\section{Introduction}

Let $p$ be prime and $V=\F_p^n$.  A bent partition in the sense of
Anbar and Meidl~\cite{AnbarMeidl2022} is a partition
\[
  V=A_1\sqcup\cdots\sqcup A_K
\]
with $p\mid K$ such that every assignment of the $K$ cells to the $p$ values
of $\F_p$, using every value equally often, produces a $p$-ary bent function.
The hypothesis is universal: it concerns every balanced coarsening of one
fixed fine partition.

All examples known in the founding paper had prime-power depth, and the
authors asked whether this must always hold~\cite[Section~6]{AnbarMeidl2022}.
The question remained open in the 2026 survey
\cite{AnbarKalayciMeidl2026}.  Wang, Wei, and Fu proved prime-power depth for
their class $\mathcal{WBP}$, in which the generated bent functions have a
common regularity type~\cite{WangWeiFu2026}; this includes every Boolean bent
partition.  A later result for the bent specialization of plateaued
partitions assumes cellwise negation symmetry in odd characteristic
\cite{WangWeiFuLiLi2026}.  The argument below removes both restrictions and
proves a stronger fine-partition statement.

Write $L:V\to I$ for the fine label map and fix $h\ne0$.  The quantity
\[
 D_h=\#\{x\in V:L(x+h)=L(x)\}
\]
is hidden from any one coarse function.  If $K=pm$, however, two distinct
fine labels collide under a uniformly random balanced coarsening with
probability $(m-1)/(K-1)$.  Every coarse derivative has exactly $p^{n-1}$
zeros.  Averaging therefore gives
\[
 D_h+(p^n-D_h)\frac{m-1}{K-1}=p^{n-1},
 \qquad\text{hence}\qquad KD_h=p^n.
\]
This identifies the fine cells as a partitioned difference family (PDF), or
equivalently the map $L$ as zero-difference balanced (ZDB), before any
prime-power structure on the label set is known.

The paper has three principal contributions.
\begin{enumerate}[label=(\roman*),leftmargin=*]
  \item Every bent partition is a PDF and its label map is ZDB, with exact
        nonzero-difference multiplicity $p^n/K$.
  \item Hence $K\mid p^n$ and $K=p^t$ without regularity, weak regularity, or
        cell-symmetry assumptions; moreover $1\le t<n$.
  \item Combining the new divisibility with the established
        dimension-and-parameter restriction gives
        $t\le\lfloor n/2\rfloor$, with the
        even-dimensional and odd ternary cases treated under their correct
        hypotheses.
\end{enumerate}

The collision coefficient used in the proof is classical in universal
hashing and resolvable designs~\cite{Stinson1994,LiuChen2013}, and PDFs and
ZDB maps are established objects~\cite{DingYin2005,Ding2008,
BurattiJungnickel2019,WangZhou2014}.  The contribution is the recovery
implication from the bent-partition axiom.  To the best of our knowledge, it
has not previously been stated in this unconditional form.

Bent partitions have also been studied through partial difference sets,
vectorial dual-bent functions, and LP-packings
\cite{AnbarKalayciMeidl2022PDS,WangFuWei2023PDS,
AlkanAnbarKalayciMeidl2024}.  Recent work shows both that these stronger
structures need not accompany every bent partition and that twisted variants
remain available in wider classes
\cite{AnbarFuKalayciMeidlWangWei2026,
AnbarKalayciMeidl2026Twisted}.  None of those structural correspondences
supplies the unconditional recovery argument used here.

\section{Definitions}

For $f:V\to\F_p$ and $h\in V$, write
\[
  \Delta_h f(x)=f(x+h)-f(x).
\]
We use the standard derivative consequence of $p$-ary bentness:
\[
  \#\{x\in V:\Delta_hf(x)=a\}=p^{n-1}
  \qquad(h\ne0,\ a\in\F_p).
\tag{2.1}\label{eq:bent-derivative}
\]
For clarity, this implication does not hide a regularity assumption.  With
$\zeta_p=e^{2\pi i/p}$, write
\[
 W_f(u)=\sum_{x\in V}\zeta_p^{f(x)-u\cdot x}.
\]
Walsh flatness and finite Fourier inversion (equivalently,
Wiener--Khinchin) give
\(
 \sum_x\zeta_p^{\Delta_hf(x)}=0
\)
for $h\ne0$.  If
$N_a=\#\{x:\Delta_hf(x)=a\}$, then
$\sum_aN_a\zeta_p^a=0$.  Define
\[
 Q(X)=\sum_{a=0}^{p-1}N_aX^a\in\mathbb Z[X].
\]
Then $Q(\zeta_p)=0$ and
$\deg Q\le p-1=\deg\Phi_p$.  Since $p$ is prime and
$\Phi_p(X)=1+X+\cdots+X^{p-1}$ is the minimal polynomial of $\zeta_p$,
we have $Q=c\Phi_p$ for some $c\in\mathbb Z$.  Thus all $N_a=c$; their sum
is $p^n$, proving~\eqref{eq:bent-derivative}.  This is the standard
perfect-nonlinearity characterization; see, for example,
\cite[Theorem~5 and Proposition~8]{CarletDing2004}.

Let $I$ be a $K$-element set, let $L:V\to I$ be surjective, and write
$A_i=L^{-1}(i)$.  Put $K=pm$.

\begin{definition}
The fibre partition of $L$ is a $p$-ary \emph{bent partition} of depth $K$
if, for every map $c:I\to\F_p$ satisfying
\[
  |c^{-1}(a)|=m\qquad(a\in\F_p),
\]
the coarsening $c\circ L$ is bent.
\end{definition}

This is the Anbar--Meidl definition.  It is distinct from a \emph{normal}
bent partition, which has a distinguished cell and a different counting law
\cite{AnbarMeidl2022}.

For a finite additive group $G$, a family of disjoint blocks
$\{A_i:i\in I\}$ partitioning $G$ is a \emph{partitioned difference family}
of index $\lambda$ if
\[
  \sum_{i\in I}|A_i\cap(A_i-h)|=\lambda
  \qquad(h\in G\setminus\{0\}).
\tag{2.2}\label{eq:pdf}
\]
The label map $L:G\to I$ is \emph{zero-difference balanced} with parameter
$\lambda$ if
\[
  \#\{x\in G:L(x+h)=L(x)\}=\lambda
  \qquad(h\ne0).
\tag{2.3}\label{eq:zdb}
\]
Equations~\eqref{eq:pdf} and~\eqref{eq:zdb} are the same condition, expressed
in the PDF and ZDB languages, respectively
\cite{Ding2008,BurattiJungnickel2019,WangZhou2014}.
After identifying $I$ with any group of order $K$, equation~\eqref{eq:zdb}
is the usual ZDB condition.  Its equality-only formulation is independent of
the chosen group structure on $I$.

\section{A balanced-fusion lemma}

The combinatorial step does not require a group or a Walsh transform.

\begin{lemma}[Balanced-fusion diagonal recovery]
\label{lem:recovery}
Let $X$ and $I$ be finite sets with $|X|=N$ and $|I|=K=bm$, where
$b>1$ and $m>0$.  Let $T:X\to X$ and $L:X\to I$.  For a balanced map
$c:I\to[b]$, set
\[
 Z_c(T)=\#\{x:c(L(Tx))=c(L(x))\},
 \qquad
 D_T=\#\{x:L(Tx)=L(x)\}.
\]
If $Z_c(T)=N/b$ for every balanced $c$, then $KD_T=N$.
\end{lemma}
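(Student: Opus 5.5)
We average the hypothesis $Z_c(T)=N/b$ over all balanced maps $c:I\to[b]$, taken uniformly at random. Let $\mathcal C$ be the set of balanced maps, so $|\mathcal C|=K!/(m!)^b>0$. Then
\[
 \sum_{c\in\mathcal C}Z_c(T)=|\mathcal C|\,N/b .
\]
We expand the left side by exchanging the order of summation, splitting the points $x\in X$ according to whether $L(Tx)=L(x)$:
\[
 \sum_{c}Z_c(T)=\sum_{x\in X}\#\{c\in\mathcal C: c(L(Tx))=c(L(x))\}.
\]
If $L(Tx)=L(x)$, every $c$ counts, giving $|\mathcal C|$. Otherwise $i=L(x)$ and $j=L(Tx)$ are distinct labels, and we need the number of balanced $c$ with $c(i)=c(j)$.

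The key step is the collision count, which I expect to be the only real computation. We claim it equals $|\mathcal C|\,(m-1)/(K-1)$ for every pair $i\ne j$. To see this, note that the symmetric group on $I$ acts transitively on ordered pairs of distinct labels and preserves $\mathcal C$ by precomposition, so the count $P$ is independent of the pair. Now double count pairs $(\{i,j\},c)$ with $i\ne j$ and $c(i)=c(j)$. Each $c$ contributes $b\binom{m}{2}$ such pairs, while there are $\binom{K}{2}$ pairs in total. Hence
\[
 P=|\mathcal C|\,\frac{b\binom m2}{\binom K2}=|\mathcal C|\,\frac{m-1}{K-1}.
\]
Here $K-1>0$ because $b>1$ and $m\ge1$. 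Equivalently, one can fix $c(i)$ and count directly: there are $b\binom{K-2}{m-2,m,\dots,m}$ such maps, and the ratio simplifies to the same value. When $m=1$ the collision probability is $0$, which is consistent with the formula.

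Dividing the averaged identity by $|\mathcal C|$ gives
\[
 D_T+(N-D_T)\frac{m-1}{K-1}=\frac{N}{b}.
\]
Multiplying by $K-1$ yields
\[
 D_T(K-m)+N(m-1)=\frac{N(K-1)}{b}.
\]
Using $K=bm$, so that $K-m=m(b-1)$ and $N(K-1)/b=Nm-N/b$, this becomes
\[
 D_T\,m(b-1)=N-\frac{N}{b}=\frac{N(b-1)}{b}.
\]
Since $b>1$, we may divide by $b-1$ to get $D_T\,m=N/b$, that is, $KD_T=N$.

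Note that $N/b$ need not be assumed integral. The hypothesis forces it to be integral, and the argument uses only rational arithmetic.
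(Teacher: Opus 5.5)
Your proposal is correct and follows the paper's proof: you average $Z_c(T)$ uniformly over all balanced $c$, use the collision probability $(m-1)/(K-1)$ for distinct labels, and clear denominators using $K=bm$. The only difference is cosmetic: you get the collision probability from a symmetry-plus-double-counting argument, while the paper conditions on $c(i)$ directly (which you also give as an alternative).
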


\begin{proof}
Choose $c$ uniformly from the balanced maps.  Equal labels always collide.
For distinct $i,j\in I$, conditioning on $c(i)$ shows that
\[
  \Pr[c(i)=c(j)]=\frac{m-1}{K-1}.
\]
Pointwise averaging of $Z_c(T)$ gives
\[
 D_T+(N-D_T)\frac{m-1}{K-1}=\frac Nb.
\]
Using $K=bm$ and clearing denominators yields $KD_T=N$.
\end{proof}

The probability notation abbreviates an exact average over a finite family.
More generally, the same proof works for any probability distribution
supported on balanced maps for which every distinct label pair has the same
collision probability.

\section{The fine PDF and prime-power depth}

\begin{theorem}[Fine PDF/ZDB theorem]
\label{thm:pdf-zdb}
Let $L:V\to I$ define a $p$-ary bent partition of depth $K$.  Then, for
every $h\ne0$,
\[
  K\,\#\{x\in V:L(x+h)=L(x)\}=p^n,
  \qquad
  \#\{x\in V:L(x+h)=L(x)\}=\frac{p^n}{K}.
\tag{4.1}\label{eq:Dh-main}
\]
Consequently, the fine cells form a PDF of index $p^n/K$, and $L$ is ZDB
with the same parameter.
\end{theorem}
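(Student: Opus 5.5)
The plan is to apply Lemma~\ref{lem:recovery} directly, with $X=V$, $N=p^n$, $b=p$, and the given $K=pm$. For fixed $h\ne0$, take $T$ to be the translation $T(x)=x+h$. Identify the balanced maps $c:I\to[p]$ with the balanced maps $c:I\to\F_p$ by fixing any bijection $[p]\to\F_p$. Under this identification, the condition $c(L(Tx))=c(L(x))$ holds exactly when $\Delta_h(c\circ L)(x)=0$. Hence
\[
 Z_c(T)=\#\{x\in V:\Delta_h(c\circ L)(x)=0\}.
\]
Throughout, the number $D_T$ of the lemma is the quantity $\#\{x:L(x+h)=L(x)\}$ in~\eqref{eq:Dh-main}.

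The hypothesis of the lemma is supplied by the bent-partition definition together with~\eqref{eq:bent-derivative}. For each balanced $c$, the coarsening $c\circ L$ is bent. Applying~\eqref{eq:bent-derivative} with $a=0$ gives $Z_c(T)=p^{n-1}=N/b$. This holds for every balanced $c$. We also need $b=p>1$, and $m>0$ holds since $K\ge1$ and $p\mid K$. Lemma~\ref{lem:recovery} then yields $K D_T=p^n$, which is the first equality in~\eqref{eq:Dh-main}. Dividing by $K$ gives the second.

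For the consequences, note that
\[
 \#\{x:L(x+h)=L(x)\}=\sum_{i\in I}|A_i\cap(A_i-h)|,
\]
because $x\in A_i\cap(A_i-h)$ exactly when $L(x)=L(x+h)=i$. So~\eqref{eq:Dh-main} is precisely~\eqref{eq:pdf} with $\lambda=p^n/K$, and it is likewise~\eqref{eq:zdb}. Since the equality holds for every nonzero $h$, the fine cells form a PDF and $L$ is ZDB with this parameter.

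No step here is a real obstacle, because the averaging was already carried out in the lemma. The only point needing care is that~\eqref{eq:bent-derivative} is invoked for arbitrary (possibly non-weakly-regular) bent $c\circ L$. The paper has already justified this through the cyclotomic argument, so no regularity assumption enters.
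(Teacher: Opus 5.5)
Your proposal is correct and matches the paper's proof: apply Lemma~\ref{lem:recovery} with $X=V$, $T(x)=x+h$, $b=p$, and use the $a=0$ case of~\eqref{eq:bent-derivative} as the hypothesis. The extra checks you include (identifying $[p]$ with $\F_p$, verifying $m>0$, and rewriting the count as $\sum_i|A_i\cap(A_i-h)|$) are details the paper leaves implicit, and they are all sound.
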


\begin{proof}
Write $K=pm$.  For every balanced $c:I\to\F_p$, the coarsening $c\circ L$
is bent.  Equation~\eqref{eq:bent-derivative}, with $a=0$, gives
\[
 \#\{x:c(L(x+h))=c(L(x))\}=p^{n-1}.
\]
Apply Lemma~\ref{lem:recovery} with $X=V$, $T(x)=x+h$, and $b=p$.
This gives $KD_h=p^n$, proving~\eqref{eq:Dh-main};
\eqref{eq:pdf}--\eqref{eq:zdb} give the stated interpretations.
\end{proof}

\begin{corollary}[Prime-power depth]
\label{cor:prime-power}
Every $p$-ary bent partition has depth $K=p^t$ for an integer
$t$ satisfying $1\le t\le n$.
\end{corollary}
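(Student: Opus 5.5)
The plan is to read the corollary off Theorem~\ref{thm:pdf-zdb} by integrality. Since $n\ge1$, the space contains some $h\ne0$, and~\eqref{eq:Dh-main} for that $h$ gives
\[
K\,D_h=p^n,\qquad D_h=\#\{x\in V:L(x+h)=L(x)\}\in\mathbb Z_{\ge0}.
\]
Because $K\ge1$ and $p^n>0$, we get $D_h\ge1$, so $K$ is a positive integer divisor of $p^n$. As $p$ is prime, unique factorization shows that the divisors of $p^n$ are exactly $p^t$ with $0\le t\le n$. Hence $K=p^t$ for some $0\le t\le n$.

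It remains to show $t\ge1$. This follows from the definition of a bent partition, which requires $p\mid K$ (we wrote $K=pm$ with $m$ a positive integer, since the balanced condition $|c^{-1}(a)|=m$ must hold for every $a$ and the cells are nonempty). Thus $K\ge p>1$, so $t\neq0$, and we obtain $1\le t\le n$.

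There is no real obstacle here; the only care needed is that $V\setminus\{0\}$ is nonempty, so that Theorem~\ref{thm:pdf-zdb} applies to at least one $h$. (The sharper bound $t<n$ stated in the introduction would need a separate argument using nonempty cells. One would show $K=p^n$ is impossible: then every cell would be a singleton and $D_h=1$, yet $D_h=0$ for singleton cells, which is a contradiction. That refinement is not needed for the statement as given.)
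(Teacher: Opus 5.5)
Your argument is correct and is essentially the paper's proof: pick $h\ne0$, read $K\mid p^n$ off the integer identity $K D_h=p^n$, use primality of $p$ to get $K=p^t$, and use $p\mid K$ to get $t\ge1$. The one point to tighten is that you assume $n\ge1$ without justification, whereas the paper derives it: $L$ is surjective onto $K\ge p\ge2$ labels, so $|V|\ge2$ and hence $n\ne0$.
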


\begin{proof}
Surjectivity onto $K\ge p\ge2$ labels excludes $n=0$.  Hence $V$ contains a
nonzero $h$, and the integer identity in Theorem~\ref{thm:pdf-zdb} gives
$K\mid p^n$.  Primality of $p$ gives $K=p^t$, while the defining condition
$p\mid K$ gives $t\ge1$.
\end{proof}

This resolves the prime-power-depth question without either the common
regularity hypothesis of Wang--Wei--Fu~\cite{WangWeiFu2026} or the
cell-symmetry hypothesis used in~\cite{WangWeiFuLiLi2026}.  It also yields
the finer PDF/ZDB property rather than divisibility alone.

\begin{corollary}[Strict depth]
\label{cor:strict}
In Corollary~\ref{cor:prime-power}, $1\le t<n$; equivalently, $K<p^n$.
\end{corollary}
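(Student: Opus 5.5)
We need to rule out $K=p^n$. In that case $D_h=p^n/K=1$ by Theorem~\ref{thm:pdf-zdb}. But if $K=p^n$, every fine cell is nonempty and there are $p^n$ of them, so each cell is a singleton. Then $L$ is a bijection, and $L(x+h)=L(x)$ forces $x+h=x$, which is impossible for $h\ne0$. Hence $D_h=0$. This contradicts $D_h=1$ (such $h$ exists because $n\ge1$). Therefore $K\ne p^n$. Combined with $K\mid p^n$ from Corollary~\ref{cor:prime-power}, this gives $K=p^t$ with $t<n$, and $t\ge1$ as before.

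The plan has three steps. First, assume $t=n$ and derive that all cells are singletons from surjectivity and a cardinality count. Second, compute $D_h=0$ directly for a nonzero $h$. Third, compare with \eqref{eq:Dh-main}, which gives $D_h=1$.

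An alternative route checks the bent condition directly: with singleton cells, every balanced function $V\to\F_p$ must be bent. One would exhibit a balanced non-bent function, for instance a nonzero linear functional when $n\ge1$. Its derivative in a direction $h$ with $u\cdot h=0$ is the constant $0$, which violates \eqref{eq:bent-derivative}. For $n=1$ one would instead take $h$ with $u\cdot h\ne0$, where the derivative is a nonzero constant. The first route is cleaner, so I would use it.

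I do not expect a genuine obstacle. The only care needed is confirming that a nonzero $h$ exists, i.e.\ $n\ge1$, which the proof of Corollary~\ref{cor:prime-power} already established.
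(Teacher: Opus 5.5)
Your argument is correct and is essentially the paper's proof in contrapositive form. The paper notes directly that $D_h=p^n/K>0$ forces two distinct points to share a label, so the surjection $L$ is not injective and $K<p^n$; you instead assume $K=p^n$, deduce that $L$ is a bijection with $D_h=0$, and contradict \eqref{eq:Dh-main}.
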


\begin{proof}
Surjectivity onto $K\ge p>1$ labels excludes $n=0$.  Choose $h\ne0$.
Equation~\eqref{eq:Dh-main} is positive, so some two distinct points have the
same label.  Thus the surjective map $L$ is not injective, whence
$K<|V|=p^n$.
\end{proof}

The dimensional refinement requires a case distinction that is absent from
the original even-dimensional cell-size statement.

\begin{corollary}[Dimension-sensitive bound]
\label{cor:dimension}
Let $K=p^t$ be the depth of a bent partition with nonempty cells.
Then $1\le t\le\lfloor n/2\rfloor$.  More precisely:
\begin{enumerate}[label=(\alph*),leftmargin=*]
  \item if $n$ is even, then $K\mid p^{n/2}$;
  \item if $n$ is odd, then the only possible case is $(p,K)=(3,3)$.
\end{enumerate}
\end{corollary}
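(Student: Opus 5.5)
The plan is to combine Corollary~\ref{cor:prime-power} and Corollary~\ref{cor:strict}, which give $K=p^t$ with $1\le t<n$, with two external restrictions that the introduction flags as known: the classical cell-size theorem in even dimension, and the ternary three-fibre parameter restriction in odd dimension. The proof splits on the parity of $n$. In each branch, the only new input from this paper is $K=p^t$; the rest is importing the right hypothesis-correct statement.

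\emph{Case (a), $n$ even.} I would invoke the Anbar--Meidl cell-size theorem~\cite{AnbarMeidl2022}. For a bent partition of $\F_p^n$ with $n$ even, it states that each cell has size $p^{n/2}$-compatible cardinality. Concretely, each $|A_i|$ is congruent to a multiple of $p^{n-1}$ modulo a multiple of $p^{n/2}$, which forces $K\le p^{n/2}$, that is, $K\mid p^{n/2}$ once $K$ is known to be a prime power. The step I expect to need care is the exact form of that theorem: it may be phrased as $K\le p^{n/2}$, or as a statement about cell sizes. In either form, because $K=p^t$ by Corollary~\ref{cor:prime-power}, a bound $K\le p^{n/2}$ upgrades immediately to divisibility $K\mid p^{n/2}$. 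This gives $t\le n/2=\lfloor n/2\rfloor$.

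\emph{Case (b), $n$ odd.} Here I would use the known restriction that in odd dimension a bent partition can exist only for $p=3$ with exactly three fibres; this is the ternary three-fibre parameter restriction cited in the introduction. Combined with $p\mid K$ and $K=p^t$, this gives $(p,K)=(3,3)$, so $t=1$. Since $n$ is odd and $n>t\ge1$ by Corollary~\ref{cor:strict}, we have $n\ge3$, hence $\lfloor n/2\rfloor\ge1=t$. This is where strictness is genuinely used: it excludes $n=1$, where $\lfloor n/2\rfloor=0$.

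Assembling the two cases gives $1\le t\le\lfloor n/2\rfloor$. The main obstacle is not the logic but the citation hygiene in case (b). The odd-dimensional restriction must be quoted under exactly the hypotheses the original authors prove, namely nonempty cells and the Anbar--Meidl definition rather than normal bent partitions. If that result only restricts the parameters $(p,K)$ without excluding small $n$, then Corollary~\ref{cor:strict} supplies the missing $n\ge3$.
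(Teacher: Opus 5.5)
Your case (b) matches the paper's argument: the cited odd-dimensional restriction gives $(p,K)=(3,3)$, and Corollary~\ref{cor:strict} rules out $n=1$, so $n\ge3$ and $t=1\le\lfloor n/2\rfloor$. The gap is in case (a). Your description of the cell-size theorem (``each $|A_i|$ is congruent to a multiple of $p^{n-1}$ modulo a multiple of $p^{n/2}$'') does not constrain $K$ at all. The step ``the cell-size theorem forces $K\le p^{n/2}$'' is also not something the cell sizes give by themselves. As the paper uses it, the theorem says the following: with $q=p^{n/2}$ and after relabelling, the cells $A_2,\dots,A_K$ share the common size $q(q-1)/K$ or $q(q+1)/K$. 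It is a statement about cell sizes, not a bound on $K$. Those data alone do not force $K\le q$. For example, take $p=2$, $n=4$ (so $q=4$) and $K=6$. The minus-sign formula then gives five cells of size $2$ and $|A_1|=6$, which sum to $16$ with $p\mid K$. More generally, $K=p(q-1)$ gives cells of size $q/p$. So you cannot first derive an inequality and only afterwards invoke $K=p^t$ to upgrade it to divisibility.

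The missing idea is where the prime-power depth actually enters. The common cell size is an integer, so $p^t=K$ divides $q(q\pm1)$. Since $q\pm1$ is coprime to $p$, this forces $p^t\mid q$, which is (a), and it gives $t\le n/2=\lfloor n/2\rfloor$. Corollary~\ref{cor:prime-power} is used precisely in this coprimality step. Without it, $K\mid q(q\pm1)$ leaves room for factors of $q\pm1$, as the example above shows.
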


\begin{proof}
Suppose first that $n$ is even and put $q=p^{n/2}$.  After relabelling the
cells, the classical cell-size theorem gives the common size of
$A_2,\ldots,A_K$ as $q(q-1)/K$ or $q(q+1)/K$, with a common sign
\cite[Theorem~3]{AnbarMeidl2022}.  Hence $p^t\mid q(q\pm1)$.  Since
$q\pm1$ is coprime to $p$, it follows that $p^t\mid q$, proving (a) and
$t\le\lfloor n/2\rfloor$.

For odd $n$, the cited dimension-and-parameter restriction gives the sole
possibility $(p,K)=(3,3)$; equivalently, the partition is the three-fibre
partition of a ternary bent function
\cite[Remark~6]{AlkanAnbarKalayciMeidl2024}
\cite[Remark~2]{AnbarKalayciMeidl2026}.  Thus $t=1$.
The nonempty-cell convention, together with Corollary~\ref{cor:strict},
excludes $n=1$; hence $n\ge3$ and $1\le\lfloor n/2\rfloor$.
\end{proof}

The divisibility $K\mid p^{n/2}$ is asserted only in the even-dimensional
case.  In odd dimension $p^{n/2}$ is not an integer, and the ternary exception
must not be passed through the even-dimensional cell-size formula.

\section{Consequences and limits}

Put $v=p^n$, $\mu=v/K$, and $s_i=|A_i|$.  The fine PDF identity immediately
fixes the second moment of the cell sizes.

\begin{proposition}[Cell-size moment]
\label{prop:sphere}
For every bent partition,
\[
  \sum_{i\in I}s_i^2=v+\mu(v-1),
  \qquad
  \sum_{i\in I}(s_i-\mu)^2=\mu(K-1).
\tag{5.1}\label{eq:sphere}
\]
\end{proposition}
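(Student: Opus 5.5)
The plan is to count ordered pairs $(x,y)\in V\times V$ with $L(x)=L(y)$ in two ways, and then expand the square for the centered form. Grouping the pairs by the common cell $A_i$, there are $\sum_{i\in I}s_i^2$ of them. Grouping instead by the difference $h=y-x$, they split into the $v$ diagonal pairs (those with $h=0$) and, for each $h\ne0$, the pairs $(x,x+h)$ with $L(x+h)=L(x)$. By Theorem~\ref{thm:pdf-zdb}, each nonzero $h$ contributes exactly $p^n/K=\mu$ such pairs. Summing over the $v-1$ nonzero differences gives
\[
  \sum_{i\in I}s_i^2=v+\mu(v-1),
\]
which is the first identity in \eqref{eq:sphere}. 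Equivalently, one can sum \eqref{eq:pdf} over $h\ne0$ and interchange summations, using $\sum_{h\in V}|A_i\cap(A_i-h)|=s_i^2$.

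For the second identity, I would expand
\[
  \sum_i(s_i-\mu)^2=\sum_i s_i^2-2\mu\sum_i s_i+K\mu^2 .
\]
Here $\sum_i s_i=v$ because the cells partition $V$, and $K\mu^2=\mu v$ because $K\mu=v$. Substituting these gives $v+\mu v-\mu-2\mu v+\mu v=v-\mu$, and since $v=K\mu$ this equals $\mu(K-1)$.

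I expect no real obstacle here. The only point needing care is to include the diagonal $h=0$ exactly once, and to note that Theorem~\ref{thm:pdf-zdb} applies uniformly to all $v-1$ nonzero $h$. Because the partition is surjective onto $K\ge2$ labels, $n\ge1$, so that set of nonzero differences is nonempty.
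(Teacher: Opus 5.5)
Your proposal is correct and follows the paper's proof exactly: you count ordered same-label pairs by the difference $h=y-x$, take $v$ from $h=0$ and $\mu$ from each of the $v-1$ nonzero $h$ by Theorem~\ref{thm:pdf-zdb}, and then expand the centered sum using $\sum_i s_i=v=K\mu$. Your arithmetic checks out. The remark that $n\ge1$ is harmless but not needed, since the identity holds regardless of whether nonzero differences exist.
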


\begin{proof}
Count ordered pairs $(x,y)$ with $L(x)=L(y)$ by $h=y-x$.  The zero
difference contributes $v$, and each of the other $v-1$ differences
contributes $\mu$, proving the first identity.  Expanding the second and using
$\sum_i s_i=v=K\mu$ proves the claim.
\end{proof}

This is the standard ZDB second-moment identity in centered integral form
\cite{Ding2008,WangZhou2014}.  It is a useful arithmetic sieve, not a separate
source of the depth theorem.

Since $K=p^t$, any bijection $I\to\F_p^t$ relabels $L$ as a surjective
vectorial bent map $F$: every nonzero linear component pulls back to a balanced
coarsening and is therefore bent.  This relabelling observation is already
recorded in the bent-partition literature
\cite[Remark~5]{AnbarKalayciMeidl2026}; the new point here is that its
prime-power hypothesis now holds automatically.  Classical perfect
nonlinearity then gives the full vector-derivative law
\[
 \#\{x:F(x+h)-F(x)=u\}=p^{n-t}
 \qquad(h\ne0,\ u\in\F_p^t),
\]
after the chosen relabelling; see~\cite{CarletDing2004}.  Thus the fine ZDB
identity is the zero-difference part of a classical stronger law once the
previously unstructured label set has been shown to have prime-power size.

For completeness, the main averaging argument also has a small certificate
when $(p,K)=(3,6)$.  Label the six cells by $0,\dots,5$ and consider
\[
C=\begin{pmatrix}
0&0&1&1&2&2\\
0&1&0&2&2&1\\
0&1&2&0&1&2\\
0&2&2&1&0&1\\
0&2&1&2&1&0
\end{pmatrix}.
\tag{5.2}\label{eq:five-rows}
\]
The equal-colour pairs in the five rows form a one-factorization of $K_6$.
Thus, for a fixed $h\ne0$, a same-label transition is counted five times and
a distinct-label transition once.  If all five coarsenings were bent, with
$N=3^n$, their zero-derivative counts would give
\[
  \frac{5N}{3}=N+4D_h,
  \qquad\text{so}\qquad 6D_h=N,
\]
which is impossible by parity.  Hence these five tests alone rule out depth
six.  Among unweighted families of balanced maps for which every unordered
distinct pair is monochromatic in the same positive integer number of rows,
at least five rows are necessary, because each row covers three of the
fifteen edges of $K_6$.  This does not rule out a different deduction from
four coarsenings outside this pair-uniform scheme.

No converse is established here.  Fine ZDB records only zero-difference
counts, and additional conditions on the other derivative values would be
needed to recover the universal bent-coarsening axiom.  Likewise,
prime-power depth is only a necessary condition for existence.  Normal bent
partitions have a distinguished cell and a two-level law, so
Theorem~\ref{thm:pdf-zdb} does not apply to them by simply adjoining that
cell.

The other derivative values define a larger inverse problem.  For $h\ne0$,
put $M_h(i,j)=\#\{x:L(x)=i,\ L(x+h)=j\}$.  For balanced
$c:I\to\F_p$ and $a\in\F_p$, define
\[
 \mathcal M_{c,a}(M)=\sum_{c(j)-c(i)=a}M_{ij}.
\tag{5.3}\label{eq:transition-measurement}
\]
Then $\mathcal M_{c,a}(M_h)$ counts the points where
$\Delta_h(c\circ L)=a$.

For a $K$-element set $I$, let
\[
 \mathcal Z=
 \left\{M\in\mathbb R^{I\times I}:
   \sum_jM_{ij}=0\ (i\in I),\quad
   \sum_iM_{ij}=0\ (j\in I)\right\}.
\tag{5.4}\label{eq:zero-margin-space}
\]

\begin{proposition}[Transition-measurement obstruction]
\label{prop:transition-kernel}
Let $K=pm\ge3$.  Then
\[
 \bigcap_{\substack{c:I\to\F_p\\ c\ \mathrm{balanced}}}
 \ \bigcap_{a\in\F_p}
 \ker\!\left(\mathcal M_{c,a}\big|_{\mathcal Z}\right)
\]
contains a $(K-1)$-dimensional symmetric subspace of trace-zero matrices.
\end{proposition}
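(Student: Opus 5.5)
The plan is to exhibit an explicit family of matrices, parametrised by zero-sum weight vectors on $I$, and to check directly that each such matrix lies in $\mathcal Z$ and is annihilated by every measurement $\mathcal M_{c,a}$. The natural ansatz, suggested by the fact that balanced maps $c$ treat all labels symmetrically, is a matrix whose off-diagonal entries depend additively on the two labels. Accordingly, for $f\in\mathbb R^I$ with $S=\sum_i f(i)$, I would set
\[
 M^{(f)}_{ij}=f(i)+f(j)\quad(i\ne j),\qquad M^{(f)}_{ii}=g(i),
\]
and determine $g$ from the margin conditions.

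The first step is to impose membership in $\mathcal Z$ and the trace condition. The matrix is symmetric, so it suffices to check rows. Row $i$ sums to
\[
 g(i)+(K-1)f(i)+\bigl(S-f(i)\bigr)=g(i)+(K-2)f(i)+S .
\]
Forcing this to vanish gives $g(i)=-(K-2)f(i)-S$. The trace is then $-(K-2)S-KS=-(2K-2)S$, so a trace-zero matrix requires $S=0$, since $K\ge3$. This gives $g=-(K-2)f$ on the hyperplane $\{f:\sum_i f(i)=0\}$. The map $f\mapsto M^{(f)}$ is linear, and it is injective: for $K\ge3$ the diagonal entries $-(K-2)f(i)$ already recover $f$. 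So its image is a $(K-1)$-dimensional space of symmetric trace-zero matrices inside $\mathcal Z$.

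The second step is to verify the kernel condition for every balanced $c$ and every $a\in\F_p$. Write $F_b=\sum_{c(i)=b}f(i)$, so that $\sum_b F_b=S=0$, and recall that each colour class has exactly $m$ elements.
\begin{itemize}
 \item For $a\ne0$, no diagonal entry is measured. Each pair of classes $(c^{-1}(b),c^{-1}(b+a))$ contributes $mF_b+mF_{b+a}$, and summing over $b$ gives $2mS=0$.
 \item For $a=0$, the off-diagonal pairs within a class $b$ contribute $2(m-1)F_b$, and the diagonal contributes $\sum_i g(i)=-(K-2)S$. The total is $2(m-1)S-(K-2)S=0$.
\end{itemize}
This shows that every $M^{(f)}$ lies in the stated intersection of kernels.

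The only real obstacle is finding the right ansatz. Once the additive off-diagonal form is chosen, the balancedness of $c$ (equal class sizes $m$) makes every measurement collapse to a multiple of $S$, and the zero-sum constraint kills it. The remaining care points are confirming that $K\ge3$ is exactly what makes the trace condition force $S=0$ and what makes $f\mapsto M^{(f)}$ injective, which gives the dimension $K-1$.
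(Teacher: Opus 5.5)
Your argument is correct and is essentially the paper's proof: your $M^{(f)}$ is exactly $H(-f)$ in the notation of \eqref{eq:transition-kernel}, your colour-class computation of $\mathcal M_{c,a}$ is the paper's successor/predecessor count, and injectivity comes from the diagonal when $K\ge3$. One small correction to your closing remark: the trace condition $-(2K-2)S=0$ already forces $S=0$ for $K\ge2$, so $K\ge3$ is needed only for injectivity.
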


\begin{proof}
For $d=(d_i)_{i\in I}$ with $\sum_i d_i=0$, define
\[
 H(d)_{ii}=(K-2)d_i,
 \qquad
 H(d)_{ij}=-(d_i+d_j)\quad(i\ne j).
\tag{5.5}\label{eq:transition-kernel}
\]
The assignment $d\mapsto H(d)$ is linear, every $H(d)$ is symmetric, and a
direct sum gives zero row and column sums.  If $a\ne0$,
each label has $m$ possible successors and $m$ possible predecessors in
\eqref{eq:transition-measurement}, and no diagonal pair occurs.  Hence
\[
 \mathcal M_{c,a}(H(d))
 =-m\sum_i d_i-m\sum_j d_j=0.
\]
For $a=0$, use
$H(d)_{ij}=-(d_i+d_j)+K d_i\1_{i=j}$ to obtain
\[
 \mathcal M_{c,0}(H(d))
 =-2m\sum_i d_i+K\sum_i d_i=0.
\]
Thus $H(d)$ lies in the displayed common kernel.  Since $K\ge3$, its diagonal recovers
$d$, so $d\mapsto H(d)$ is injective on the $(K-1)$-dimensional hyperplane
\(\sum_i d_i=0\).  Finally,
\(\operatorname{tr}H(d)=(K-2)\sum_i d_i=0\).
\end{proof}

The proposition shows that the coarse derivative measurements are not
injective on the natural zero-margin perturbation space, while leaving the
trace recovered in Theorem~\ref{thm:pdf-zdb} unaffected.  It does not
assert that every $H(d)$ is realizable; finer rigidity requires additional
nonlinear or cross-shift information.

\section{Lean formalization}
\label{sec:lean}

The balanced-fusion lemma, the fine same-label count underlying the PDF/ZDB
theorem and its principal arithmetic consequences, the cell-size moment, the
five-test certificate, and the transition-measurement obstruction were formalized and
kernel-checked in Lean 4.32.0 with
mathlib 4.32.0~\cite{Lean4,Mathlib2020}.  The even-dimensional arithmetic
deduction accepts the classical cell-size divisibility as an explicit input;
the formalization does not claim to reprove that external analytic theorem or
the odd-dimensional classification.  Replay instructions, source-level scope
notes, axiom audits, and cryptographic hashes accompany the manuscript.

\section{Conclusion}

The universal balanced-coarsening axiom determines the hidden same-cell
diagonal exactly.  For every nonzero translation,
\[
  D_h=p^n/K.
\]
Therefore every bent partition is a PDF, its label map is ZDB, and its depth
divides $p^n$.  The formerly conditional prime-power conclusion follows
without regularity or symmetry assumptions.  The proof uses only derivative
balance and an exact finite collision average.

Although the nonzero derivative values contain more information than the
zero count used in the proof, Proposition~\ref{prop:transition-kernel} shows
that the associated measurement operator is non-injective on the ambient real
zero-margin perturbation space.  This does not assert two realizable
bent-partition transition matrices with identical data.  The trace remains
the statistic recovered by the universal axiom.

\bibliographystyle{amsplain}
\bibliography{references}

\end{document}